\providecommand{\tabularnewline}{\\}
\numberwithin{equation}{section}
\numberwithin{figure}{section}
\theoremstyle{plain}
\newtheorem{thm}{\protect\theoremname}
  \theoremstyle{remark}
  \newtheorem*{rem*}{\protect\remarkname}
\theoremstyle{plain}
\numberwithin{equation}{section}
\newcommand{\mockalph}[1]{}
  \providecommand{\remarkname}{Remark}
\providecommand{\theoremname}{Theorem}
\begin{document}

\title[Delayed acceptance MCMC by data subsampling]{Speeding up MCMC by delayed acceptance and data subsampling}

\author{Matias Quiroz\textsuperscript{$\star\dagger$}, Minh-Ngoc Tran\textsuperscript{$\ddagger$},
Mattias Villani\textsuperscript{$\star$} and Robert Kohn\textsuperscript{$\dagger\dagger$}}

\thanks{\textit{\textsuperscript{$^{\star}$}}\emph{Division of Statistics
and Machine Learning, Link\"{o}ping University.}\\
 $^{\dagger}$\textit{Research Division, Sveriges Riksbank.} \\
\textit{ }\textsuperscript{$\ddagger$}\emph{Discipline of Business
Analytics, University of Sydney.}\\
\textsuperscript{$\dagger\dagger$}\emph{Australian School of Business,
University of New South Wales}. \\
Quiroz was partially supported by VINNOVA grant 2010-02635. Tran was
partially supported by a Business School Pilot Research grant. Villani
was partially supported by Swedish Foundation for Strategic Research
(Smart Systems: RIT 15-0097). Kohn was partially supported by Australian
Research Council Centre of Excellence grant CE140100049. The views
expressed in this paper are solely the responsibility of the authors
and should not be interpreted as reflecting the views of the Executive
Board of Sveriges Riksbank. The authors would like to thank the Editor,
the Associate Editor and the reviewers for their comments that helped
to improve the manuscript.}
\begin{abstract}
The complexity of the Metropolis-Hastings (MH) algorithm arises from
the requirement of a likelihood evaluation for the full data set in
each iteration. \citet{payne2015bayesian} propose to speed up the
algorithm by a delayed acceptance approach where the acceptance decision
proceeds in two stages. In the first stage, an estimate of the likelihood
based on a random subsample determines if it is likely that the draw
will be accepted and, if so, the second stage uses the full data likelihood
to decide upon final acceptance. Evaluating the full data likelihood
is thus avoided for draws that are unlikely to be accepted. We propose
a more precise likelihood estimator which incorporates auxiliary information
about the full data likelihood while only operating on a sparse set
of the data. We prove that the resulting delayed acceptance MH is
more efficient compared to that of \citet{payne2015bayesian}. The
caveat of this approach is that the full data set needs to be evaluated
in the second stage. We therefore propose to substitute this evaluation
by an estimate and construct a state-dependent approximation thereof
to use in the first stage. This results in an algorithm that (i) can
use a smaller subsample $m$ by leveraging on recent advances in Pseudo-Marginal
MH (PMMH) and (ii) is provably within $O(m^{-2})$ of the true posterior.

\noindent \textsc{Keywords}: Bayesian inference, Markov chain Monte
Carlo, Delayed acceptance MCMC, Large data, Survey sampling

\noindent \newpage{}
\end{abstract}

\maketitle

\section{Introduction\label{sec:Introduction}}

Markov Chain Monte Carlo (MCMC) methods have been the workhorse for
sampling from nonstandard posterior distributions in Bayesian statistics
for nearly three decades. Recently, with increasingly more complex
models and/or larger data sets, there has been a surge of interest
in improving the $O(n)$ complexity emerging from the necessity of
a complete data scan in each iteration of the algorithm.

There are a number of approaches proposed in the literature to speed
up MCMC. Some authors divide the data into different partitions and
carry out MCMC for the partitions in a parallel and distributed manner.
The draws from each partition's subposterior are subsequently combined
to obtain an approximation of the full posterior distribution. This
line of work includes \citet{scott2013bayes,neiswanger2013asymptotically,wang2013parallel,minsker2014scalable,nemeth2016merging},
among others. Other authors use a subsample of the data in each MCMC
iteration to speed up the algorithm, see e.g. \citet{korattikara2014austerity},
\citet{bardenet2014towards}, \citet{maclaurin2014firefly}, \citet{maire2015light},
\citet{bardenet2015markov} and Quiroz et al. (\citeyear{quiroz2016speeding},
\citeyear{quiroz2016exact}). Finally, delayed acceptance MCMC has
been used to speed up computations \citep{banterle2014accelerating,payne2015bayesian}.
The main idea in delayed acceptance is to avoid computations if there
is an indication that the proposed draw will ultimately be rejected.
\citet{payne2015bayesian} consider a two stage delayed acceptance
MCMC that uses a random subsample of the data in the first stage to
estimate the likelihood function at the proposed draw. If this estimate
suggests that the proposed draw is likely to be rejected, the algorithm
does not proceed to the second stage that evaluates the true likelihood
(using all data). \citet{banterle2014accelerating} divide the acceptance
ratio in the standard Metropolis-Hastings (MH) \citep{metropolis1953equation,hastings1970monte}
into several stages, which are sequentially performed until a first
rejection is detected implying a final rejection of the proposed draw.
Clearly, both delayed acceptance approaches save computations for
proposals that are unlikely to be accepted, but on the other hand
require all stages to be computed for those likely to be accepted.
The latter corresponds to at least the same computational cost for
one iteration as the standard MH.

This paper extends the delayed acceptance algorithms of \citet{payne2015bayesian}
in the following directions. First, we replace the likelihood estimator
of the first stage by an efficient estimator that employs control
variates to significantly reduce the variance. We show that this modification
results in an algorithm which is more effective in promoting good
proposals to the second stage. Since this algorithm computes the true
likelihood (as does MH) in the second stage whenever a draw is up
for a final accept/reject decision, we refer to it as Delayed Acceptance
(standard) MH (DA-MH). Second, we propose a delayed acceptance algorithm
that overcomes the caveat of a full data likelihood evaluation in
the second stage by replacing it with an estimate. This second stage
estimate is initially developed in the approximate subsampling Pseudo-Marginal
MH (PMMH) framework in \citet{quiroz2016speeding}. Thus, our second
contribution is to speed up their algorithm by combining with delayed
acceptance and we document large speedups for our so called Delayed
Acceptance PMMH (DA-PMMH) algorithm compared to MH. \citet{payne2015bayesian}
instead propose to circumvent the full data evaluation by combining
their delayed acceptance sampler with the consensus Monte Carlo in
\citet{scott2013bayes}, which currently lacks any control of the
error produced in the approximate posterior. In contrast, we can make
use of results in \citet{quiroz2016speeding} to control the approximation
error and also ensure that it is $O(m^{-2})$, where $m$ is the subsample
size used for estimating the likelihood.

While the delayed acceptance MH has the disadvantage of using all
data whenever a proposed draw proceeds to the second stage, we believe
that the successful implementation provided here is essential if exact
simulation is of importance. By exact simulation we mean that the
Markov chain produced by the subsampling algorithm has the same invariant
distribution as that of a Markov chain produced by an algorithm that
uses all the data. Exact simulation using subsets of the data has
proven to be extremely challenging. Pseudo-marginal MCMC \citep{beaumont2003estimation,andrieu2009pseudo}
provides a framework for conducting Markov chain simulation by only
using an estimate (in this setting estimated from a subsample of the
data) of the likelihood. Remarkably, although the true likelihood
is never evaluated, the simulation is exact provided that the likelihood
estimate is unbiased and almost surely positive. One route to exact
simulation by data subsampling suggested by \citet{bardenet2015markov}
is to compute a sequence of unbiased estimators of the log-likelihood,
apply the technique in \citet{rhee2015unbiased} to debias the resulting
likelihood estimator and subsequently use it within a pseudo-marginal
MCMC. \citet{bardenet2015markov} note that, as proved by \citet{jacob2015nonnegative},
a lower bound on the log-likelihood estimators is needed to ensure
positiveness. This typically requires computations using the full
data set, naturally defeating the purpose of subsampling. \citet{quiroz2016exact}
instead suggests to compute a soft lower bound which is defined to
be a lower bound with a high probability. Since the estimate can occasionally
be negative, the pseudo-marginal in \citet{quiroz2016exact} instead
targets an absolute measure following \citet{lyne2015russian}, who
show that the draws can be corrected with an importance sampling type
of step to estimate expectations of posterior functions exactly. We
note that although this certainly is exact inference of the expectation
(it is not biased) the algorithm does not provide exact simulation
as defined above. \citet{maclaurin2014firefly} propose Firefly Monte
Carlo, which introduces an auxiliary variable for each observation,
which determines if it is included when evaluating the likelihood.
The distribution of these binary variables is such that when they
are integrated out, the marginal posterior is the same as the one
targeted by MH, thus providing exact simulation. Typically a small
fraction of observations are included, hence speeding up the execution
time significantly compared to MH. However, the augmentation scheme
severely affects the mixing of the Firefly algorithm and it has been
demonstrated to perform poorly compared to MH and other subsampling
approaches \citep{bardenet2015markov,quiroz2016speeding,quiroz2016exact}.
We conclude that delayed acceptance, out of the discussed methods,
seems to be the only feasible route to obtain exact simulation via
data subsampling. We demonstrate that our implementation is crucial
to obtain an algorithm that can improve on MH.

This paper is organized as follows. Section \ref{sec:Likelihood-estimators}
presents the efficient likelihood estimator. Sections \ref{sec:delayedMH}
and \ref{sec:delayedPMMH} outline the delayed acceptance MH and PMMH
methodologies in the context of subsampling. Section \ref{sec:Application}
applies the method to a micro-economic data set containing nearly
5 million observations. Section \ref{sec:Conclusions-and-Future}
concludes and Appendix \ref{sec:Comparing-the-Efficiency} proves
Theorem \ref{thm:theorem1}.

\section{Likelihood estimators with subsets of data\label{sec:Likelihood-estimators}}

\subsection{Structure of the likelihood\label{subsec:Structure-likelihood}}

Let $\theta$ be the parameter in a model with density $p(y_{k}|\theta,x_{k})$,
where $y_{k}$ is a potentially multivariate response vector and $x_{k}$
is a vector of covariates for the $k$th observation. Let $l_{k}(\theta)=\log p(y_{k}|\theta,x_{k})$
denote the $k$th observation's log-density, $k=1,\dots,n$. Given
conditionally independent observations, the likelihood function can
be written
\begin{equation}
p(y\vert\theta)=\exp\left[l(\theta)\right],\label{eq:likelihood}
\end{equation}
where $l(\theta)=\sum_{k=\text{1}}^{n}l_{k}(\theta)$ is the log-likelihood
function. To estimate $p(y\vert\theta)$, we first estimate $l(\theta)$
based on a sample of size $m$ from the population $\left\{ l_{1}(\theta),\dots,l_{n}(\theta)\right\} $
and subsequently use \eqref{eq:likelihood}. The first step corresponds
to the classical survey sampling problem of estimating a population
total: see \citet{sarndal2003model} for an introduction to survey
sampling. Note that \eqref{eq:likelihood} is more general than identically
independently distributed (iid) observations, although we require
that the log-likelihood can be written as a sum of terms, where each
term depends on a unique piece of data information. One example is
models with a random effect for each subject, with possibly multiple
individual observations per subject (longitudinal data). In this case,
a single term in the log-likelihood sum corresponds to the log joint
density for a subject, and we therefore sample subjects (rather than
individual observations) when estimating \eqref{eq:likelihood}.

\subsection{Data subsampling\label{subsec:Data-subsampling}}

A main distinction of sampling schemes is whether the sample is obtained
with or without replacement. It is clear that sampling with replacement
gives a higher variance for any function of the sample: including
the same element more than once does not provide any further information
about finite population characteristics. However, it results in sample
elements that are independent which facilitates the derivation of
Theorem \ref{thm:theorem1} for the delayed acceptance MH. It also
allows us to apply the theory and methodology in \citet{quiroz2016speeding}
for the delayed acceptance PMMH in Section \ref{sec:delayedPMMH}.
It should be noted that the two sampling schemes are approximately
the same when $m\ll n$.

Let $u=(u_{1},\dots u_{m})$ be a vector of indices obtained by sampling
$m$ indices with replacement from $\{1,\dots,n\}$ and let $\left\{ l_{u_{1}}(\theta),\dots,l_{u_{m}}(\theta)\right\} $
be the sample. We will consider Simple Random Sampling (SRS) which
means that
\[
\Pr(u_{i}=k)=\frac{1}{n}\text{ for }k=1,\dots,n,\text{ and for all }i=1,\dots,m.
\]
\citet{payne2015bayesian} use SRS (but without replacement) to form
an unbiased estimate of the log-likehood. However, the elements $l_{k}(\theta)$
(for a fixed $\theta$) vary substantially across the population and
estimating the total $\sum_{k=1}^{n}l_{k}(\theta)$ with SRS is well
known to be very inefficient under such a scenario. Instead $\Pr(u_{i}=k)$
should be (approximately) proportional to a size measure for $l_{k}(\theta)$.
\citet{quiroz2016speeding} argue that this so called Proportional-to-Size
sampling is in many cases unlikely to be successful as knowledge of
a size measure (which also depends on $\theta$) for all $k$ can
often defeat the purpose of subsampling. They instead propose to use
SRS but incorporate control variates in the estimator for variance
reduction which we now turn to.

\subsection{Efficient log-likelihood estimators\label{subsec:Efficient-log-likelihood-estimat}}

The idea in \citet{quiroz2016speeding} is to homogenize the population
$\left\{ l_{1}(\theta),\dots,l_{n}(\theta)\right\} $: if the resulting
elements are roughly of the same size then SRS is expected to be efficient.
Let $q_{k}(\theta)$ denote an approximation of $l_{k}(\theta)$ and
decompose
\begin{eqnarray}
l(\theta) & = & \sum_{k=1}^{n}q_{k}(\theta)+\sum_{k=1}^{n}\left[l_{k}(\theta)-q_{k}(\theta)\right]\label{eq:Compute_q}\\
 & = & q(\theta)+d(\theta),\nonumber 
\end{eqnarray}
where
\[
q(\theta)=\sum_{k=1}^{n}q_{k}(\theta),\quad d(\theta)=\sum_{k=1}^{n}d_{k}(\theta),\quad\text{and}\quad d_{k}(\theta)=l_{k}(\theta)-q_{k}(\theta).
\]
We emphasize that all quantities depend on $\theta$ which, from now
on, is sometimes suppressed for a compact notation. 

Define the random variables $\eta_{i}=nd_{u_{i}}$ and $d_{u_{i}}=l_{u_{i}}-q_{u_{i}}$
with $\mathrm{E}[\eta_{i}]=d$ and 
\[
\sigma_{\eta}^{2}=\mathrm{V}[\eta_{i}]=n^{2}\mathrm{V}[d_{u_{i}}]=n\sum_{k=1}^{n}(d_{k}-\bar{d})^{2},\quad\text{with }\bar{d}=\sum_{k=1}^{n}d_{k}.
\]
The difference estimator estimates $d$ in \eqref{eq:Compute_q} with
the Hansen-Hurwitz estimator \citep{hansen1943theory},
\[
\hat{d}_{m}=\frac{1}{m}\sum_{i=1}^{m}\eta_{i},\quad\text{with }\mathrm{E}[\hat{d}_{m}]=d\quad\text{and}\quad\mathrm{V}[\hat{d}_{m}]=\frac{\sigma_{\eta}^{2}}{m}.
\]
We can obtain an unbiased estimate $\hat{\sigma}_{\eta}^{2}=n^{2}\mathrm{\hat{V}}[d_{u_{i}}]$,
where $\hat{\mathrm{V}}[d_{u_{i}}]$ is the usual unbiased sample
variance estimator (of the population $\{d_{1},\dots d_{n}\}$). The
estimator of the log-likelihood is thus 
\begin{equation}
\hat{l}_{m}=\sum_{k=1}^{n}q_{k}(\theta)+\hat{d}_{m},\quad\text{with }\mathrm{E}[\hat{l}_{m}]=l\quad\text{and}\quad\sigma^{2}=\mathrm{V}[\hat{l}_{m}]=\frac{\sigma_{\eta}^{2}}{m}.\label{eq:log-likelihoodEstimator}
\end{equation}
It also follows that $\hat{\sigma}_{m}^{2}=\hat{\sigma}_{\eta}^{2}/m$
is an unbiased estimator of $\sigma^{2}$.

\citet{quiroz2016speeding} reason that observations close in data
space $(x_{k},y_{k})$ should, for a fixed $\theta$, have similar
$l_{k}(\theta)$ values. They cluster the data space into $K$ clusters
and approximate, within each cluster, $l_{k}(\theta)$ by a second
order Taylor series expansion $q_{k}(\theta)$ around the centroid
of the cluster. This allows computing $q$ using $K$ evaluations
(instead of $n$), see \citet{quiroz2016speeding} for details. \citet{bardenet2015markov}
propose similar control variates, but instead expand with respect
to $\theta$ around a reference value $\theta^{\star}$. While it
can be shown that $q$ can be computed using a single evaluation,
the approximation can be inaccurate when $\theta$ is far from $\theta^{\star}$
giving a large $\sigma^{2}$.

We note that the log-likelihood estimate in \citet{payne2015bayesian}
(if with replacement sampling is used) is a special case of \eqref{eq:log-likelihoodEstimator},
namely when $q_{k}=0$ for all $k$. We will see that this results
in a poor performance of the delayed acceptance algorithm and that
control variates are crucial for a successful implementation.

\subsection{Approximately bias-corrected likelihood estimator \label{subsec:Bias-corrected-likelihood}}

It is clear that an unbiased log-likelihood estimator becomes biased
for the likelihood when transformed to the ordinary scale by the exponential
function. Since by the Central Limit Theorem (CLT)
\[
\sqrt{m}\left(\hat{l}_{m}(\theta)-l(\theta)\right)\rightarrow\mathcal{N}(0,\sigma_{\eta}^{2})\quad\text{as}\quad m\rightarrow\infty,
\]
\citet{quiroz2016speeding} (see also \citealt{ceperley1999penalty,nicholls2012coupled})
approximately bias corrects the likelihood estimate 
\begin{equation}
\hat{p}_{m}(y|\theta,u)=\exp\left(\hat{l}_{m}(\theta)-\hat{\sigma}_{\eta}^{2}(\theta)/2m\right).\label{eq:LikelihoodEstimator}
\end{equation}
Equation \eqref{eq:LikelihoodEstimator} is unbiased if $\sigma_{\eta}^{2}$
is used in place of $\hat{\sigma}_{\eta}^{2}$ (and normality holds
for $\hat{l}_{m}$). In practice we need to use $\hat{\sigma}_{\eta}^{2}$
(to not use all data) and \citet{quiroz2016speeding} show that \eqref{eq:LikelihoodEstimator}
is asymptotically unbiased, with the bias decreasing as $O(m^{-2})$.
For the rest of the paper we refer to \eqref{eq:LikelihoodEstimator}
as a ``bias-corrected'' estimator, where the quotation marks highlight
that the correction is not exact.

\section{Delayed acceptance MH with subsets of data\label{sec:delayedMH}}

\subsection{The algorithm \label{subAlgorithm-DA-MH}}

\citet{payne2015bayesian} propose a subsampling delayed acceptance
MH following \citet{christen2005mcmc}. The aim in delayed acceptance
is to simulate a Markov chain $\{\theta^{(j)}\}_{j=1}^{N}$ which
admits the posterior
\[
\pi(\theta)=\frac{p(y|\theta)p(\theta)}{p(y)},\text{ where }p(y)=\int p(y|\theta)p(\theta)d\theta\text{ and }p(\theta)\text{ denotes the prior,}
\]
as invariant distribution. Moreover, the likelihood $p(y|\theta)$
should only be evaluated if there is a good chance of accepting the
proposed $\theta$. 

The algorithm in \citet{payne2015bayesian} proceeds as follows. Let
$\theta_{c}=\theta^{(j)}$ denote the current state of the Markov
chain. In the first stage, propose $\theta^{\prime}\sim q_{1}(\theta|\theta_{c})$
and compute
\begin{eqnarray}
\alpha_{1}(\theta_{c}\rightarrow\theta^{\prime}) & = & \min\left\{ 1,\frac{\hat{p}_{m}(y|\theta^{\prime},u)p(\theta^{\prime})/q_{1}(\theta^{\prime}|\theta_{c})}{\hat{p}_{m}(y|\theta_{c},u)p(\theta_{c})/q_{1}(\theta_{c}|\theta^{\prime})}\right\} ,\label{eq:alpha1}
\end{eqnarray}
where $\hat{p}_{m}(y|\theta,u)$ is the estimator in Section \ref{subsec:Bias-corrected-likelihood}
without control variates for $\hat{l}_{m}$ (but not ``bias-corrected'',
see below). Now, propose
\[
\theta_{p}=\begin{cases}
\theta^{\prime} & \quad\text{w.p.}\quad\alpha_{1}(\theta_{c},\theta^{\prime})\\
\theta_{c} & \quad\text{w.p.}\quad1-\alpha_{1}(\theta_{c},\theta^{\prime}),
\end{cases}
\]
and move the chain to the next state $\theta^{(j+1)}=\theta_{p}$
with probability
\begin{eqnarray}
\alpha_{2}(\theta_{c}\rightarrow\theta_{p}) & = & \min\left\{ 1,\frac{p(y|\theta_{p})p(\theta_{p})/q_{2}(\theta_{p}|\theta_{c})}{p(y|\theta_{c})p(\theta_{c})/q_{2}(\theta_{c}|\theta_{p})}\right\} ,\label{eq:alpha2}
\end{eqnarray}
where 
\begin{eqnarray*}
q_{2}(\theta_{p}|\theta_{c}) & = & \alpha_{1}(\theta_{c}\rightarrow\theta_{p})q_{1}(\theta_{p}|\theta_{c})+r(\theta_{c})\delta_{\theta_{c}}(\theta_{p}),\quad r(\theta_{c})=1-\int\alpha_{1}(\theta_{c}\rightarrow\theta_{p})q_{1}(\theta_{p}|\theta_{c})d\theta_{p},
\end{eqnarray*}
and $\delta$ is the Dirac delta function. If rejected we set $\theta^{(j+1)}=\theta_{c}$.

Note that $\alpha_{2}$ in \eqref{eq:alpha2} is equivalent to the
acceptance probability of a standard MH with a proposal density $q_{2}(\theta_{p}|\theta_{c})$
that is a mixture of two proposal densities: the first proposes to
move from $\theta_{c}$ to $\theta_{p}$ (from the ``slab'' $q_{1}$)
and the second proposes to stay at $\theta_{c}$ (from the ``spike'').
The mixture weight for the ``slab'' is $\alpha_{1}$ in \eqref{eq:alpha1}
(with $\theta^{\prime}=\theta_{p}$): if the likelihood estimate is
higher at $\theta^{\prime}$ compared to that of $\theta_{c}$ (after
correcting with $q_{1}$) we propose $\theta^{\prime}=\theta_{p}$
with probability 1. Conversely, if it is lower, we propose to move
but with a probability smaller than 1 which decreases the less likely
we think that $\theta^{\prime}$ will be accepted as indicated by
the estimated likelihood. The estimator $\hat{p}_{m}(y|\theta^{\prime},u)$
used by \citet{payne2015bayesian} does not depend on the current
state of the Markov chain and hence the mixture weights in $q_{2}$
are state-independent. Convergence to the invariant distribution therefore
follows from standard MH theory. The same applies when the estimator
uses the control variates in \citet{quiroz2016speeding}, or in \citet{bardenet2015markov}
but only for a fixed $\theta^{\star}$. \citet{bardenet2015markov}
suggest setting $\theta^{\star}=\theta_{c}$ every now and then to
prevent that the control variates can be poor if the chain is far
from $\theta^{\star}$: the resulting approximation is clearly state-dependent
and standard MH theory does not apply. Instead, convergence to the
invariant $\pi(\theta)$ is proved in \citet{christen2005mcmc} and
the delayed algorithm is exactly as above but with
\[
\hat{p}_{m}(y|\cdot,u)=\hat{p}_{m}^{(\theta_{c})}(y|\cdot,u),\quad\text{emphasizing that it depends on the current state }\theta_{c}.
\]
The state-dependent algorithm is a key ingredient when developing
the delayed acceptance block PMMH in Section \ref{sec:delayedPMMH}.

As noted above, \citet{payne2015bayesian} do not ``bias correct''
the likelihood estimate as they point out that the algorithm will
have the correct invariant distribution anyway. We remark that without
control variates it is not a good idea to apply the correction as
the variance of $\hat{\sigma}_{m}^{2}$ is huge which adversely affects
$\hat{p}_{m}(y|\cdot,u)$. An efficient estimate of $\hat{\sigma}_{m}^{2}$,
however, would certainly improve $\hat{p}_{m}(y|\cdot,u)$. While
the control variates allow us to estimate $\hat{\sigma}_{m}^{2}$
accurately, we will not implement the ``bias-correction'' when comparing
to \citet{payne2015bayesian} in order to make the comparison fair.

It is beneficial to also update $u$ in order to avoid the risk of
having a subset of observations for which the approximation is poor.
This is especially important for the estimator in \citet{payne2015bayesian},
as not using control variates can result in the particular subset
having highly heterogeneous elements, which is detrimental for SRS.
We note that updating $u$ is still a valid MH because (i) $u$ is
not a state of the Markov chain and (ii) the distribution $p(u)=1/n^{m}$
(SRS) does not depend on $\theta$. Thus, the transition kernel of
an algorithm that updates $u$ is a state-independent mixture of transition
kernels, where each of the kernels satisfies detailed balance either
by standard MH (or \citet{christen2005mcmc} if the approximation
depends on $\theta_{c}$). Since the weights $1/n^{m}$ do not depend
on $\theta$, it follows that the mixture also satisfies detailed
balance, and thus has $\pi(\theta)$ as invariant distribution. It
is unnecessary to update $u$ in every iteration, instead one can
update $u$ randomly to save the overhead cost of indexing the data
matrix when obtaining the subset of observations.

We remark that delayed acceptance (sometimes with names as early rejection
or surrogate transition), although without data subsampling, has been
considered earlier in the literature. References include \citet{fox1997sampling,liu2008monte,cui2011bayesian,smith2011estimating,solonen2012efficient,golightly2015delayed,sherlock2015adaptive}.
Each stage in \citet{banterle2014accelerating} (see Section \ref{sec:Introduction})
uses a partition of the data and can thus be considered as delayed
acceptance with data subsampling. The advantage of our algorithm is
that, because it only has two stages and the second stage evaluates
the full data likelihood, we can instead estimate this likelihood
in order to never do the evaluation for the full data set, see Section
\ref{sec:delayedPMMH}.

\subsection{Efficiency of delayed acceptance MH when subsampling the data}

When considering efficiency of MCMC algorithms with additional computational
costs (e.g. estimating the target), there are two types of fundamentally
different efficiencies that interplay. The first is the statistical
efficiency, which we will measure by the asymptotic (as the number
of MCMC iterates go to infinity) variance of an estimate based on
output from the Markov chain. Consider two MCMC algorithms $\mathcal{A}_{1}$
and $\mathcal{A}_{2}$ with the same invariant distribution. Then
$\mathcal{A}_{1}$ is statistically more (less) efficient than $\mathcal{A}_{2}$
if it has a lower (higher) asymptotic variance. The second is the
computational efficiency, which solely concerns ``execution time''
to produce a given number of iterates. The measures of statistical
and computational efficiency (and a combination of them) used in this
article are presented later in this section.

\citet{sherlock2015efficiency} (in a non-subsampling context) study
the statistical efficiency for delayed acceptance random walk Metropolis
and, moreover, an efficiency that also takes into account the computational
efficiency for the case where the target is estimated (DA-PMMH in
Section \ref{sec:delayedPMMH}).

\citet{christen2005mcmc} note that, because the transition kernels
of both the MH and delayed acceptance MH are derived from the same
proposal $q_{1}$, and in addition $\alpha_{2}\leq1$, the delayed
acceptance MH will be less statistically efficient than MH. The intuition
is that under these conditions the chain clearly exhibits a more ``sticky''
behavior and an estimate based on these samples will have a larger
asymptotic variance under DA-MH than MH. Notice that the closer $\alpha_{2}$
is to $1$, the more statistically efficient the delayed acceptance
algorithm is, and when $\alpha_{2}=1$ it is equivalent to the standard
MH which gives the upper bound of the possible statistical efficiency
achieved by a DA-MH.

Result 1 in \citet{payne2015bayesian} gives the alternative formulation
(for state-independent approximations) 
\begin{eqnarray}
\alpha_{2}(\theta_{c}\rightarrow\theta_{p}) & = & \min\left\{ 1,\frac{\hat{p}_{m}(y|\theta_{c},u)/p(y|\theta_{c})}{\hat{p}_{m}(y|\theta_{p},u)/p(y|\theta_{p})}\right\} .\label{eq:alpha2final}
\end{eqnarray}
 Let $l_{k}(\theta_{c},\theta_{p})=l_{k}(\theta_{c})-l_{k}(\theta_{p})$
and denote by $\hat{l}_{m}(\theta_{c},\theta_{p})$ the estimate of
$l(\theta_{c},\theta_{p})=\sum_{k=1}^{n}l_{k}(\theta_{c},\theta_{p})$.
Similarly to \eqref{eq:log-likelihoodEstimator},
\begin{eqnarray}
\hat{l}_{m}(\theta_{c},\theta_{p}) & = & q(\theta_{c},\theta_{p})+\frac{1}{m}\sum_{i=1}^{m}\zeta_{i},\quad\text{with }q(\theta_{c},\theta_{p})=\sum_{k=1}^{n}q_{k}(\theta_{c},\theta_{p}),\label{eq:diff_likelihood_est}
\end{eqnarray}
where $q_{k}(\theta_{c},\theta_{p})=q_{k}(\theta_{c})-q_{k}(\theta_{p})$
and the $\zeta_{i}$'s are iid with
\[
\Pr\left(\zeta_{i}=nD_{k}(\theta_{c},\theta_{p})\right)=\frac{1}{n},\text{ with }D_{k}=\left(l_{k}(\theta_{c},\theta_{p})-q_{k}(\theta_{c},\theta_{p})\right)\quad\text{for }i=1,\dots m.
\]
We can also show that
\begin{equation}
\mathrm{E}[\hat{l}_{m}(\theta_{c},\theta_{p})]=l(\theta_{c},\theta_{p})\text{ and }V[\hat{l}_{m}(\theta_{c},\theta_{p})]=\frac{\sigma_{\zeta}^{2}}{m}\text{ with }\sigma_{\zeta}^{2}=n\sum_{k=1}^{n}\left(D_{k}(\theta_{c},\theta_{p})-\bar{D}_{F}(\theta_{c},\theta_{p})\right)^{2},\label{eq:mean_var_log_ratio}
\end{equation}
where $\bar{D}_{F}$ is the mean over the full population. When not
``bias-corrected'', the ratio appearing in \eqref{eq:alpha2final}
becomes
\begin{equation}
R_{m}=\exp\left(\hat{l}_{m}(\theta_{c},\theta_{p})-l(\theta_{c},\theta_{p})\right).\label{eq:assymptotic_dist_ratios-1-1}
\end{equation}

We now propose a theorem that relates $\mathrm{E}\left[\alpha_{2}(\theta_{c}\rightarrow\theta_{p})\right]$
to the variance $\sigma_{R}^{2}=\mathrm{V}[\hat{l}_{m}(\theta_{c},\theta_{p})]$
under the assumption that $\hat{l}_{m}(\theta_{c},\theta_{p})\sim\mathcal{N}(l(\theta_{c},\theta_{p}),\sigma_{R}^{2})$
(equivalently $R_{m}\sim\log\mathcal{N}(0,\sigma_{R}^{2})$ in \eqref{eq:assymptotic_dist_ratios-1-1}).
In turn, $\alpha_{2}$ relates to the statistical efficiency as discussed
above. The assumption of normality is justified by a standard CLT
for $\hat{l}_{m}(\theta_{c},\theta_{p})$ since the $\zeta_{i}$'s
are iid.
\begin{thm}
\label{thm:theorem1}Suppose that we run a DA-MH with a state-independent
approximation
\[
\hat{l}_{m}(\theta_{c},\theta_{p})\sim\mathcal{N}\left(l(\theta_{c},\theta_{p}),\sigma_{R}^{2}\right),\quad\text{where }\sigma_{R}^{2}(\theta_{c},\theta_{p})=\mathrm{V}[\log(R_{m})],
\]
which has a second stage acceptance probability
\[
\alpha_{2}(\theta_{c}\rightarrow\theta_{p})=\min\left(1,R_{m}\right),\quad R_{m}=\exp\left(\hat{l}_{m}(\theta_{c},\theta_{p})-l(\theta_{c},\theta_{p})\right).
\]
Then
\begin{eqnarray*}
\mathrm{E}[\alpha_{2}(\theta_{c}\rightarrow\theta_{p})] & = & \exp\left(\sigma_{R}^{2}(\theta_{c},\theta_{p})/2\right)\left(1-\Phi\left(\sigma_{R}(\theta_{c},\theta_{p})\right)\right)+0.5.
\end{eqnarray*}
In particular, $\mathrm{E}[\alpha_{2}(\theta_{c}\rightarrow\theta_{p})]$
is a decreasing function of $\sigma_{R}$.
\end{thm}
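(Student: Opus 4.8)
The plan is to exploit the lognormal structure of $R_m$. Since $\hat{l}_m(\theta_c,\theta_p)-l(\theta_c,\theta_p)\sim\mathcal{N}(0,\sigma_R^2)$, I would write $X=\log R_m=\sigma_R Z$ with $Z\sim\mathcal{N}(0,1)$, so that $\alpha_2(\theta_c\rightarrow\theta_p)=\min(1,e^X)$. First split according to the sign of $X$: on $\{X\geq 0\}$ we have $\min(1,e^X)=1$, whereas on $\{X<0\}$ we have $\min(1,e^X)=e^X$. Taking expectations,
\[
\mathrm{E}[\alpha_2(\theta_c\rightarrow\theta_p)]=\Pr(X\geq 0)+\mathrm{E}\left[e^{X}\mathbf{1}\{X<0\}\right].
\]
Because $X$ is a centered Gaussian, $\Pr(X\geq 0)=1/2$, which produces the additive constant $0.5$ in the claimed identity.

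Next I would evaluate the remaining expectation by completing the square in the Gaussian density. Writing it as an integral over $z<0$ of $e^{\sigma_R z}$ against the standard normal density $\phi$ and using $\sigma_R z-\tfrac{1}{2}z^2=-\tfrac{1}{2}(z-\sigma_R)^2+\tfrac{1}{2}\sigma_R^2$,
\[
\mathrm{E}\left[e^{X}\mathbf{1}\{X<0\}\right]=e^{\sigma_R^2/2}\int_{-\infty}^{0}\phi(z-\sigma_R)\,dz=e^{\sigma_R^2/2}\,\Phi(-\sigma_R)=e^{\sigma_R^2/2}\bigl(1-\Phi(\sigma_R)\bigr),
\]
where $\Phi$ is the standard normal cdf. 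Combining the two displays gives $\mathrm{E}[\alpha_2(\theta_c\rightarrow\theta_p)]=\exp(\sigma_R^2/2)(1-\Phi(\sigma_R))+0.5$, as stated (all $\sigma_R$ being shorthand for $\sigma_R(\theta_c,\theta_p)$).

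For the monotonicity claim, put $g(\sigma)=e^{\sigma^2/2}(1-\Phi(\sigma))+1/2$ for $\sigma\geq 0$; since $\sigma_R$ is a standard deviation only the behavior of $g$ on $[0,\infty)$ matters. Differentiating and using $e^{\sigma^2/2}\phi(\sigma)=1/\sqrt{2\pi}$,
\[
g'(\sigma)=e^{\sigma^2/2}\Bigl[\sigma\bigl(1-\Phi(\sigma)\bigr)-\phi(\sigma)\Bigr].
\]
The bracket equals $-\phi(0)<0$ at $\sigma=0$, and for $\sigma>0$ it is negative by the classical Mills-ratio inequality $1-\Phi(\sigma)<\phi(\sigma)/\sigma$ (itself following from $\int_\sigma^\infty\phi(t)\,dt<\int_\sigma^\infty(t/\sigma)\phi(t)\,dt$). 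Hence $g'<0$ on $[0,\infty)$ and $\mathrm{E}[\alpha_2(\theta_c\rightarrow\theta_p)]$ is strictly decreasing in $\sigma_R$. Alternatively, differentiating $\mathrm{E}[\min(1,e^{\sigma Z})]=1/2+\mathrm{E}[e^{\sigma Z}\mathbf{1}\{Z<0\}]$ directly under the integral sign gives derivative $\mathrm{E}[Z\,e^{\sigma Z}\mathbf{1}\{Z<0\}]<0$, since $Z<0$ on the indicator set.

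I expect the argument to be entirely routine: everything reduces to a single finite Gaussian integral, so no interchange-of-limits or integrability subtleties arise, and the only named auxiliary fact is the elementary Mills-ratio bound (which the second approach above avoids altogether). The ``main obstacle'' is therefore only bookkeeping — keeping the sign split and the completion-of-square consistent — rather than anything substantive.
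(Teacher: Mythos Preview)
Your proof is correct and, for the closed-form identity, essentially identical to the paper's: both split $\mathrm{E}[\min(1,R_m)]$ at the median of $R_m$ (equivalently at $\log R_m=0$), pick up the $0.5$ from the upper half, and complete the square in the Gaussian integral on the lower half. The only cosmetic difference is that you parametrize via $Z\sim\mathcal{N}(0,1)$ while the paper works directly with the lognormal density and substitutes $y=\log x$; these are the same calculation.

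Where you do differ is the monotonicity argument. The paper differentiates to obtain the same expression $g'(\sigma_R)=e^{\sigma_R^2/2}\bigl[\sigma_R(1-\Phi(\sigma_R))-\phi(\sigma_R)\bigr]$ (written there with $e^{\sigma_R^2/2}\phi(\sigma_R)=1/\sqrt{2\pi}$) and then verifies \emph{numerically} that the bracket is bounded above by approximately $-0.23$. Your use of the Mills-ratio bound $1-\Phi(\sigma)<\phi(\sigma)/\sigma$ gives a fully analytic proof of the same inequality, and your alternative argument---differentiating $\mathrm{E}[e^{\sigma Z}\mathbf{1}\{Z<0\}]$ under the integral to get $\mathrm{E}[Z e^{\sigma Z}\mathbf{1}\{Z<0\}]<0$---is even more self-contained. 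Both of your routes are cleaner than the paper's numerical check and avoid any appeal to computation.
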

\begin{proof}
See Appendix \ref{sec:Comparing-the-Efficiency}.
\end{proof}
\begin{rem*}
It is possible to state and prove a similar theorem using a ``bias-corrected''
likelihood estimator, which in log-scale is $\hat{l}_{m}(\theta_{c},\theta_{p})-\hat{\sigma}{}_{R}^{2}(\theta_{c},\theta_{p})/2$.
This is omitted as our application in Example 1 in Section \ref{sec:Application}
does not use a ``bias-corrected'' likelihood estimator in order
to conduct a fair comparison to \citet{payne2015bayesian}.
\end{rem*}
Theorem \ref{thm:theorem1} says that if the variance of the log of
the ratio $R_{m}$ in \eqref{eq:assymptotic_dist_ratios-1-1} is lower,
then the algorithm has a higher $\alpha_{2}$ on average. Moreover,
it shows that $\alpha_{2}$ deteriorates quickly with $\sigma_{R}$,
which illustrates the importance of achieving a small $\sigma_{R}$.

Although the delayed acceptance MH is always less statistically efficient
than MH it can of course still be more generally efficient in terms
of balancing computational and statistical efficiency. Clearly, a
necessary condition to achieve a more general efficient DA-MH algorithm
than the corresponding MH requires that its computing time is faster,
i.e. it must be computationally more efficient. When comparing DA-MH
algorithms with the same computing time (by for example having the
same subsample size) then Theorem \ref{thm:theorem1} shows that a
smaller variance of the log-ratio results in a more general efficient
algorithm.

To also compare algorithms of different computing times we now define
a measure for the general efficiency discussed above. In the rest
of the paper, whenever we claim that algorithms are more or less efficient
to each other it is based on this measure. The statistical (in)efficiency
part is measured by the Inefficiency Factor (IF), which quantifies
the amount by which the variance of $(1/N)\sum_{j=1}^{N}\theta^{(j)}$
is inflated when $\theta^{(j)}$ is obtained by Markov chain simulation
compared to that of iid simulation. It is given by
\begin{eqnarray}
\mathrm{IF} & = & 1+2\sum_{l=1}^{\infty}\rho_{l},\label{eq:IF}
\end{eqnarray}
where $\rho_{l}$ is the auto-correlation at the $l$th lag of the
chain, and can be computed with the coda package in R \citep{coda2006}.
To include computational efficiency in the measure we use Effective
Draws (ED) per computing unit
\begin{eqnarray}
\mathrm{ED} & = & \frac{N}{\mathrm{IF}\times t},\label{eq:ED}
\end{eqnarray}
where $N$ is the number of MCMC iterations and $t$ is the computing
time. The measure of interest is the effective draws per computing
time of a delayed acceptance algorithm $\mathcal{A}$ relative to
that of MH, i.e.
\begin{eqnarray}
\mathrm{RED} & = & \frac{\mathrm{ED}^{\mathcal{A}}}{\mathrm{ED}^{\mathrm{MH}}}.\label{eq:RED}
\end{eqnarray}

\section{Delayed acceptance PMMH with subsets of data\label{sec:delayedPMMH}}

\subsection{PMMH for data subsampling\label{subsec:PMMH}}

\citet{quiroz2016speeding} propose a pseudo-marginal approach to
subsampling based on the ``bias-corrected'' likelihood estimator
in \eqref{eq:LikelihoodEstimator}. In the next subsection this estimate
replaces the true likelihood, thereby avoiding the evaluation of the
full data set. To allow for a smaller subsample size without adversely
affecting the mixing of the chain \citet{quiroz2016speeding} develop
a correlated pseudo-marginal approach based on \citet{deligiannidis2015correlated}.
\citet{tran2016block} \citep[see also][]{quiroz2016speeding} use
an alternative approach to correlated pseudo-marginal which we use
for our Delayed Acceptance Block PMMH (DA-BPMMH) in Section \ref{subsec:DA-CPPMH}.
Although not considered here, it is straightforward to instead correlate
the subsamples as in \citet{deligiannidis2015correlated}.

Pseudo-marginal by data subsampling targets the following posterior
on an augmented space $(\theta,u)$, 
\begin{equation}
\tilde{\pi}_{m}(\theta,u)=\hat{p}_{m}(y|\theta,u)p(u)p(\theta)/p_{m}(y),\text{ with }p_{m}(y)=\int p_{m}(y|\theta)p(\theta)d\theta.\label{eq:AugmentedPosterior-1-1}
\end{equation}
The algorithm is similar to MH except that $\theta$ and $u$ are
proposed and accepted (or rejected) jointly, with probability
\begin{eqnarray}
\alpha_{\mathrm{PMMH}} & = & \min\left\{ 1,\frac{\hat{p}_{m}(y|\theta_{p},u_{p})p(\theta_{p})/q(\theta_{p}|\theta_{c})}{\hat{p}_{m}(y|\theta_{c},u_{c})p(\theta_{c})/q(\theta_{c}|\theta_{p})}\right\} ,\label{eq:alphaPMMH}
\end{eqnarray}
where the subscripts denote the current and proposed values of $\theta$
and $u$. \citet{quiroz2016speeding} show that the algorithm converges
to a slightly perturbed target (only approximately unbiased likelihood
estimator) $\pi_{m}(\theta)$ and prove that
\[
\frac{\left|\pi_{m}(\theta)-\pi(\theta)\right|}{\pi(\theta)}\leq O(m^{-2}).
\]
Moreover, if $h(\theta)$ is a function that is absolutely integrable
with respect to $\pi(\theta)$, then $\mathrm{E}_{\pi_{m}}\left[h(\theta)\right]$
is also within $O(m^{-2})$ of its true value.

The variance of $\hat{l}_{m}$ is crucial for the general efficiency
of the algorithm: $\sigma^{2}$ in the approximate interval $[1,3.3]$
is optimal \citep{pitt2012some,doucet2015efficient,sherlock2015pseudo}.
The correlated pseudo-marginal approach induces a high positive correlation
between the estimates in \eqref{eq:alphaPMMH} by correlating $u_{c}$
and $u_{p}$. This allows the use of a less precise estimator without
getting stuck: the errors in the numerator and denominator tend to
cancel. As a consequence, $\sigma^{2}$ can be larger than above,
hence speeding up the algorithm by taking a smaller subsample. We
follow \citet{tran2016block} and divide $u$ (defined in Section
\ref{subsec:Data-subsampling}) into $G$ blocks,
\[
u=(u_{(1)},\dots,u_{(G)})\quad\text{with }\frac{m}{G}\text{ observations within each block,}
\]
and update a single block (randomly) in each iteration. Setting $G$
large induces a high positive correlation $\rho$ between $\hat{l}_{m}(\theta_{c})$
and $\hat{l}_{m}(\theta_{p})$: \citet{tran2016block} show that $\rho=1-1/G$
under certain assumptions. We set $G=100$ for the state-dependent
algorithm in our application.

\subsection{State-independent algorithm: DA-PMMH}

Delayed acceptance PMMH uses an estimate of the target in the second
stage of the algorithm. Such algorithms have recently been considered
by \citet{sherlock2015efficiency,sherlock2015adaptive} and \citet{golightly2015delayed}.
We propose a state-independent data subsampling DA-PMMH (uncorrelated
PMMH) and a state-dependent (block PMMH, where the correlation between
the subsamples follows \citealp{tran2016block}) extension DA-Block
(correlated) PMMH (DA-BPMMH) in the next subsection.

The estimated ``bias-corrected'' likelihood in \eqref{eq:LikelihoodEstimator}
is approximated in a first stage screening by $\hat{s}$ which is
discussed in Section \ref{subsec:Approx-EstimatedLikelihood}. Similarly
to the algorithm in Section \ref{subAlgorithm-DA-MH}, we desire to
only evaluate the (estimated) target (on the augmented space) if the
proposed state is likely to be accepted. Let $\left(\theta_{c},u_{c}\right)=\left(\theta^{(j)},u^{(j)}\right)$
denote the current state of the augmented Markov chain. In the first
stage, propose $\theta^{\prime}\sim q_{1}(\theta|\theta_{c})$ and
$u^{\prime}\sim p(u)$ and evaluate
\begin{eqnarray}
\alpha_{1}^{\mathrm{PMMH}}\left\{ \left(\theta_{c},u_{c}\right)\rightarrow\left(\theta^{\prime},u^{\prime}\right)\right\}  & = & \min\left\{ 1,\frac{\hat{s}(\theta^{\prime},u^{\prime})p(\theta^{\prime})/q_{1}(\theta^{\prime}|\theta_{c})}{\hat{s}(\theta_{c},u_{c})p(\theta_{c})/q_{1}(\theta_{c}|\theta^{\prime})}\right\} ,\label{eq:alpha1-1}
\end{eqnarray}
where $\hat{s}(\theta,u)$ is the approximation of $\hat{p}_{m}(y|\theta,u)$
\eqref{eq:LikelihoodEstimator}. Propose
\[
\left(\theta_{p},u_{p}\right)=\begin{cases}
\left(\theta^{\prime},u^{\prime}\right) & \quad\text{w.p.}\quad\alpha_{1}^{\mathrm{PMMH}}\left\{ \left(\theta_{c},u_{c}\right)\rightarrow\left(\theta^{\prime},u^{\prime}\right)\right\} \\
\left(\theta_{c},u_{c}\right) & \quad\text{w.p.}\quad1-\alpha_{1}^{\mathrm{PMMH}}\left\{ \left(\theta_{c},u_{c}\right)\rightarrow\left(\theta^{\prime},u^{\prime}\right)\right\} ,
\end{cases}
\]
and move to $\left(\theta^{(j+1)},u^{(j+1)}\right)=\left(\theta_{p},u_{p}\right)$
with probability
\begin{eqnarray}
\alpha_{2}^{\mathrm{PMMH}}\left\{ \left(\theta_{c},u_{c}\right)\rightarrow\left(\theta_{p},u_{p}\right)\right\}  & = & \min\left\{ 1,\frac{\hat{p}_{m}(y|\theta_{p},u_{p})p(\theta_{p})/q_{2}(\theta_{p}|\theta_{c})}{\hat{p}_{m}(y|\theta_{c},u_{c})p(\theta_{c})/q_{2}(\theta_{c}|\theta_{p})}\right\} ,\label{eq:alpha2-1}
\end{eqnarray}
where 
\begin{eqnarray*}
q_{2}(\theta_{p}|\theta_{c}) & = & \alpha_{1}^{\mathrm{PMMH}}q_{1}(\theta_{p}|\theta_{c})+r(\theta_{c})\delta_{\theta_{c}}(\theta_{p}),\quad r(\theta_{c})=1-\int\alpha_{1}^{\mathrm{PMMH}}q_{1}(\theta_{p}|\theta_{c})d\theta_{p},
\end{eqnarray*}
and $\delta$ is the Dirac delta function. If rejected we set $\left(\theta^{(j+1)},u^{(j+1)}\right)=\left(\theta_{c},u_{c}\right)$.
Similarly to the argument for the DA-MH, we recognize $\alpha_{2}^{\mathrm{PMMH}}$
as the acceptance probability for a pseudo-marginal algorithm. Since
the approximation is state-independent ($\hat{s}(\theta,u)$ independent
of the current state implies state-independent mixture weights for
$q_{2}$), convergence follows from being a pseudo-marginal algorithm
\citep{andrieu2009pseudo}. However, as the estimate is biased the
target is perturbed \citep{quiroz2016speeding} but within $O(m^{-2})$
as discussed in Section \ref{subsec:PMMH}.

\subsection{State-dependent algorithm: DA-BPMMH\label{subsec:DA-CPPMH}}

As $u$ is part of the state in a pseudo-marginal algorithm, obtaining
a state-dependent approximation is easily achieved by correlating
$u$: we sample $u_{p}\sim p(u|u_{c})$ thus $\hat{s}(\theta,u)=\hat{s}_{\theta_{c},u_{c}}(\theta,u)$.
This is the state-dependent (on the augmented space) setup in \citet{christen2005mcmc}
and it follows that the invariant distribution is $\tilde{\pi}_{m}(\theta,u)$
in \eqref{eq:AugmentedPosterior-1-1}. This is the invariant distribution
in the algorithm in \citet{quiroz2016speeding}, which we already
mentioned has a marginal $\pi_{m}(\theta)$ within $O(m^{-2})$ of
$\pi(\theta)$. We note that this state-dependent approximation is
very convenient because it automatically allows us to have a higher
variance on $\hat{l}_{m}$ because of the block PMMH mechanism (see
Section \ref{subsec:PMMH}).

\subsection{An approximation of the estimated likelihood\label{subsec:Approx-EstimatedLikelihood}}

Our approximation is inspired by adaptive delayed acceptance ideas
in \citet{cui2011bayesian} and \citet{sherlock2015adaptive}. However,
our approach is not strictly adaptive as we only learn about the proposal
for a fixed training period of $N_{\mathrm{train}}$ iterations, which
are discarded from the final draws.

The idea is to use a sparser set of the data to construct control
variates $q_{k}^{(1)}(\theta)$ in the first stage. During the training
period we learn about the discrepancy between $q^{(1)}(\theta)=\sum_{k=1}^{n}q_{k}^{(1)}(\theta)$
and $q(\theta)$ of the second stage (obtained with a denser set),
i.e. 
\begin{equation}
\underbrace{q(\theta)-q^{(1)}(\theta)}_{=e(\theta)}=f(\theta)+\epsilon,\quad\epsilon\text{ is the noise (assumed independent of \ensuremath{\theta})}.\label{eq:discrepency_between_stages}
\end{equation}
Learning about $f$ is a standard regression problem: the collection
of proposed $\theta$'s during the fixed training period are the inputs
and the discrepancies are the training data.

The trivial decomposition
\begin{eqnarray*}
\hat{l}_{m}(\theta) & = & \sum_{k=1}^{n}q_{k}(\theta)+\hat{d}_{m}(\theta)-\hat{\sigma}_{m}^{2}/2\\
 & = & \sum_{k=1}^{n}q_{k}^{(1)}(\theta)+\left(\sum_{k=1}^{n}q_{k}(\theta)-\sum_{k=1}^{n}q_{k}^{(1)}(\theta)\right)+\hat{d}_{m}(\theta)-\hat{\sigma}_{m}^{2}/2
\end{eqnarray*}
suggests the first stage approximation
\[
\hat{s}(\theta,u)=\sum_{k=1}^{n}q_{k}^{(1)}(\theta)+\hat{e}(\theta)+\hat{d}_{m}(\theta)-\hat{\sigma}_{m}^{2}/2,
\]
where $\hat{e}(\theta)$ is the prediction of the discrepancy at $\theta$.
Note that computing the true discrepancy requires the denser set to
be evaluated, whereas the prediction is very fast. For example, in
linear (or non-linear by basis functions) regression the parameters
of $f(\theta)$ are estimated once after the training data has been
collected. Prediction of $e(\theta)$ for a new ``data-observation''
$\theta$ is then a simple dot product computation, which is typically
much faster than evaluating $q(\theta)$. Note that if the $u$'s
are correlated then $\hat{s}(\theta,u)$ also depends on the current
state, i.e. $\hat{s}_{\theta_{c},u_{c}}(\theta,u)$.

In Section \ref{sec:Application} we implement both a linear regression
and (noise free) Gaussian process to learn $f$ in \eqref{eq:discrepency_between_stages},
but any regression technique can be used.

\section{Application\label{sec:Application}}

\subsection{Data and model}

We model the probability of bankruptcy conditional on a set of covariates
using a data set of $534,717$ Swedish firms for the time period 1991-2008.
We have in total $n=4,748,089$ firm-year observations. The variables
included are: earnings before interest and taxes, total liabilities,
cash and liquid assets, tangible assets, logarithm of deflated total
sales and logarithm of firm age in years. We also include the macroeconomic
variables GDP-growth rate (yearly) and the interest rate set by the
Swedish central bank. See \citet{giordani2013taking} for a detailed
description of the data set. 

We consider the logistic regression model
\begin{eqnarray*}
p(y_{k}|x_{k},\beta) & = & \left(\frac{1}{1+\exp(x_{k}^{T}\beta)}\right)^{y_{k}}\left(\frac{1}{1+\exp(-x_{k}^{T}\beta)}\right)^{1-y_{k}},
\end{eqnarray*}
where $x_{k}$ includes the variables above plus an intercept term.
We set $p(\beta)\sim N(0,10I)$ for simplicity. Since the bankruptcy
observations $(y_{k}=1)$ are sparse in the data, we follow \citet{payne2015bayesian}
and estimate the likelihood only for the $y_{k}=0$ observations.
That is, we decompose
\begin{eqnarray*}
l(\beta) & = & \sum_{\{k;y_{k}=1\}}l_{k}(\beta)+\sum_{\{k;y_{k}=0\}}l_{k}(\beta),
\end{eqnarray*}
and evaluate the first term whereas a random sample is only taken
to estimate the second term. The second term clearly follows the structure
presented in Section \ref{subsec:Structure-likelihood}.

\subsection{Performance evaluation measures}

The quantity of interest is the effective draws as defined in \eqref{eq:ED}.
We now outline how to compute $t$ as CPU time and present an alternative
measure independent of the implementation that is based number of
evaluations. We first outline a robust measure of the CPU time.

The delayed acceptance algorithms we implement have two stages, where
the first stage is filtering out draws unlikely to be accepted. One
can view MH as an algorithm that does not filter any proposed draws
at all: any draw is subject to an accept/reject decision based on
the second stage likelihood. To make total CPU time comparisons fair
between a Delayed Acceptance (DA) MH and its corresponding MH, we
compute the total time for the latter (MH) by the median time the
former (DA) spends in the second stage and multiply by the number
of MCMC iterations $N$: 
\[
\mathrm{CPU}_{\mathrm{MH}}=N\times\text{median time \ensuremath{\mathrm{DA}} stage 2}.
\]
The median is used to avoid extreme values that can arise due to external
disturbances (CPU time should be independent of $\theta$ here). For
the delayed acceptance algorithm the CPU time is
\[
\mathrm{CPU}_{\mathrm{DA}}=N\times\text{median time \ensuremath{\mathrm{DA}} stage 1}\text{ }+\mathrm{FullEval\times\text{median time \ensuremath{\mathrm{DA}} stage 2},}
\]
where $\mathrm{FullEval}$ is the number of second stage evaluations
the delayed acceptance performs. 

The (total) number of evaluations measure is straightforward for MH
($N\times n$), DA-MH without control variates ($N\times m+\mathrm{FullEval}\times n$)
and with ($N\times(K+m)+\mathrm{FullEval}\times n$), and PMMH/BPMMH
($N\times(K+m)$). For the delayed versions of PMMH/BPMMH we similarly
add the different evaluations, but note that it will be different
for the pre- and post- training period. Moreover, there is a one time
cost of learning $f(\theta)$ and also a (post-training) cost of predicting
$e(\theta)$. We translate these to number of evaluations as follows.
First, for learning $f(\theta)$ we measure the CPU time it takes
to fit it with the training data. We compare this time to the average
CPU time for the iterations during the training period. We do so because
we know the measure of the latter in terms of number of evaluations:
$K^{(1)}+K+m$, where $K^{(1)}$ and $K$ are, respectively, the number
of centroids in the sparser and denser set of data. Therefore, if
learning $f(\theta)$ is say $T$ times slower in CPU time, then this
translates to $T\cdot(K^{(1)}+K+m)$ number of evaluations. Finally,
the number of evaluations for predicting a single $e(\theta)$ depends
on the model for $f(\theta)$. For linear regression the prediction
is a dot product which we assign the same cost as computing a single
log-likelihood contribution (which is typically a function of a dot
product). For a Gaussian process, the prediction requires evaluating
the kernel for $N_{\mathrm{train}}$ observations and we let that
define the number of evaluations for a single prediction.

\subsection{Implementation details\label{subsec:Implementation-details}}

We consider the following two examples. Example 1 estimates the model
with DA-MH implemented with our efficient control variates and compares
it to the implementation in \citet{payne2015bayesian}. Example 2
estimates the model with DA-BPMMH and compare to the block PMMH algorithm
in \citet{quiroz2016speeding}. To make comparisons fair for Example
1 we use without replacement sampling (as in \citealp{payne2015bayesian}).
This sampling scheme is typically used together with the Horvitz-Thompson
estimator \citep{horvitz1952generalization}: see \citet{sarndal2003model}
on how to modify the formulas in Section \ref{subsec:Efficient-log-likelihood-estimat}
for without replacement sampling. 

Two main implementations of the difference estimator are considered.
The first computes $q_{k}$ with the second order term evaluated at
$\beta$, which we call \emph{dynamic}. The second, which we call
\emph{static}, fixes the second order term at the optimum $\beta^{\star}$.
The dynamic approach clearly provides a better approximation but is
more expensive to compute. For both the dynamic and static approaches
we compare four different sparse representations of the data for computing
$q$ in \eqref{eq:Compute_q}, each with a different number of clusters.
The clusters are obtained using Algorithm 1 in \citet{quiroz2016speeding}
on the observations for which $y=0$ ($4,706,523$ observations).
We note that, as more clusters are used to represent the data, the
approximation of the likelihood is more accurate, although it is more
expensive to compute.

We consider a Random walk MH proposal for $\beta$ where we learn
the proposal scale during the first $N_{\mathrm{train}}=5,000$ (and
also train $f(\theta)$) iterations in order to reach an acceptance
probability of $\approx0.23$ for MH \citep{roberts1997weak} and
$\approx0.10$ for BPMMH. For the delayed acceptance algorithms we
have the same targets but for $\alpha_{1}$, i.e. the first stage
acceptance probability. We discard the training samples and also a
subsequent burn-in period of $10$\% of the remaining samples ($20,000$)
when doing inference. However, the computing costs (CPU and number
of evaluations) include all $N$ iterations. 

Finally, the delayed acceptance algorithms are implemented with an
update of $u$ with probability $0.01$.

\subsection{Example 1: DA-MH\label{subsec:Results}}

Tables \ref{tab:DE} and \ref{tab:PM} summarize the results, respectively,
for the difference estimator with control variates (DE) and the estimator
in \citet{payne2015bayesian} (PM). It is evident that the difference
estimator has a larger second stage acceptance probability $\alpha_{2}$
(for a given sample size), which is a consequence of Theorem \ref{thm:theorem1}
because it has a lower $\sigma_{R}^{2}=V[\log(R_{m})].$ Figure \ref{fig:NormalityExample1}
confirms that the normality assumptions, for the smallest value of
$m$ and $K$ (the worst case scenario), are adequate for both methods.
We also note from Table \ref{tab:PM} that for some sample sizes \citet{payne2015bayesian}
performs more poorly than the standard Metropolis-Hastings algorithm.
One possible explanation is that the applications in \citet{payne2015bayesian}
have a small number of continuous covariates (one in the first application
and three in the second) and the rest are binary. It is clear that
the continuous covariate case results in more variation among the
log-likelihood contributions which is detrimental for SRS. In this
application we have eight continuous covariates which explains why
SRS without covariates performs poorly for small sampling fractions.
As an example, for a subsample of $0.1\%$ of the data, not a single
effective sample was obtained.

\begin{table}[h]
\centering \caption{\emph{Delayed acceptance MH with control variates.} The table shows
some quantities for the static and dynamic implementation with different
sparse representations of the data represented by $K$, which is the
number of clusters (expressed as \% of $n$). For each approximation
different sample sizes ($0.1,1,5$ in \% of $n$) are considered.
The quantities are the mean $\mathrm{RED}_{1}$ and $\mathrm{RED}_{2}$
in \eqref{eq:RED} measured with respect to computing time and average
number of evaluations, respectively. Furthermore, $\bar{\sigma}_{R}$
is the mean (over MCMC iterations) standard deviation of $\log(R_{m})$.
Finally, $\alpha_{1}$ and $\alpha_{2}$ are the acceptance probabilities
in \eqref{eq:alpha1} and \eqref{eq:alpha2final} (expressed in \%),
where the latter is computed conditional on acceptance in the first
stage. The corresponding MH algorithm has an acceptance rate of $\approx$$23$\%.}

{\footnotesize{}}%
\begin{tabular}{llrrrrrrrrrrrrrrrrrrrc}
\toprule 
 &  & \multicolumn{4}{l}{\textbf{\footnotesize{}Static}} & \multicolumn{3}{c}{} &  &  &  & \multicolumn{3}{l}{\textbf{\footnotesize{}Dynamic}} &  &  &  &  &  &  & \tabularnewline
\cmidrule{3-11} \cmidrule{13-21} 
 &  & {\footnotesize{}$\mathrm{RED}_{1}$} &  & {\footnotesize{}$\mathrm{RED}_{2}$} &  & {\footnotesize{}$\bar{\sigma}_{R}$} &  & {\footnotesize{}$\alpha_{1}$} &  & {\footnotesize{}$\alpha_{2}$} &  & {\footnotesize{}$\mathrm{RED}_{1}$} &  & {\footnotesize{}$\mathrm{RED}_{2}$} &  & {\footnotesize{}$\bar{\sigma}_{R}$} &  & {\footnotesize{}$\alpha_{1}$} &  & {\footnotesize{}$\alpha_{2}$} & \tabularnewline
\cmidrule{3-3} \cmidrule{5-5} \cmidrule{7-7} \cmidrule{9-9} \cmidrule{11-11} \cmidrule{13-13} \cmidrule{15-15} \cmidrule{17-17} \cmidrule{19-19} \cmidrule{21-21} 
 & \textbf{\footnotesize{}$K=0.03$ } &  &  &  &  &  &  &  &  &  &  &  &  &  &  &  &  &  &  &  & \tabularnewline
\cmidrule{2-2} 
 & {\footnotesize{}0.1} & {\footnotesize{}0.93} &  & {\footnotesize{}0.95} &  & {\footnotesize{}5.90} &  & {\footnotesize{}22} &  & {\footnotesize{}14} &  & {\footnotesize{}2.35} &  & {\footnotesize{}2.56} &  & {\footnotesize{}1.48} &  & {\footnotesize{}23} &  & {\footnotesize{}57} & \tabularnewline
 & {\footnotesize{}1} & {\footnotesize{}2.43} &  & {\footnotesize{}2.65} &  & {\footnotesize{}1.40} &  & {\footnotesize{}22} &  & {\footnotesize{}59} &  & {\footnotesize{}2.23} &  & {\footnotesize{}3.47} &  & {\footnotesize{}0.45} &  & {\footnotesize{}22} &  & {\footnotesize{}85} & \tabularnewline
 & {\footnotesize{}5} & {\footnotesize{}2.09} &  & {\footnotesize{}2.78} &  & {\footnotesize{}0.57} &  & {\footnotesize{}24} &  & {\footnotesize{}82} &  & {\footnotesize{}0.73} &  & {\footnotesize{}3.27} &  & {\footnotesize{}0.19} &  & {\footnotesize{}24} &  & {\footnotesize{}93} & \tabularnewline
\cmidrule{2-2} 
 & \textbf{\footnotesize{}$K=0.21$ } &  &  &  &  &  &  &  &  &  &  &  &  &  &  &  &  &  &  &  & \tabularnewline
\cmidrule{2-2} 
 & {\footnotesize{}0.1} & {\footnotesize{}1.51} &  & {\footnotesize{}1.52} &  & {\footnotesize{}3.87} &  & {\footnotesize{}21} &  & {\footnotesize{}24} &  & {\footnotesize{}2.87} &  & {\footnotesize{}2.84} &  & {\footnotesize{}0.83} &  & {\footnotesize{}25} &  & {\footnotesize{}74} & \tabularnewline
 & {\footnotesize{}1} & {\footnotesize{}2.81} &  & {\footnotesize{}3.02} &  & {\footnotesize{}0.99} &  & {\footnotesize{}22} &  & {\footnotesize{}69} &  & {\footnotesize{}3.03} &  & {\footnotesize{}3.91} &  & {\footnotesize{}0.27} &  & {\footnotesize{}22} &  & {\footnotesize{}91} & \tabularnewline
 & {\footnotesize{}5} & {\footnotesize{}2.05} &  & {\footnotesize{}2.71} &  & {\footnotesize{}0.39} &  & {\footnotesize{}26} &  & {\footnotesize{}88} &  & {\footnotesize{}0.78} &  & {\footnotesize{}3.17} &  & {\footnotesize{}0.11} &  & {\footnotesize{}25} &  & {\footnotesize{}96} & \tabularnewline
\cmidrule{2-2} 
 & \textbf{\footnotesize{}$K=0.71$ } &  &  &  &  &  &  &  &  &  &  &  &  &  &  &  &  &  &  &  & \tabularnewline
\cmidrule{2-2} 
 & {\footnotesize{}0.1} & {\footnotesize{}2.24} &  & {\footnotesize{}2.20} &  & {\footnotesize{}2.10} &  & {\footnotesize{}21} &  & {\footnotesize{}46} &  & {\footnotesize{}3.03} &  & {\footnotesize{}3.49} &  & {\footnotesize{}0.41} &  & {\footnotesize{}23} &  & {\footnotesize{}86} & \tabularnewline
 & {\footnotesize{}1} & {\footnotesize{}3.53} &  & {\footnotesize{}3.30} &  & {\footnotesize{}0.60} &  & {\footnotesize{}22} &  & {\footnotesize{}81} &  & \textit{\emph{\footnotesize{}1.97}} &  & \textit{\emph{\footnotesize{}3.70}} &  & {\footnotesize{}0.13} &  & \textit{\emph{\footnotesize{}23}} &  & \textit{\emph{\footnotesize{}96}} & \tabularnewline
 & {\footnotesize{}5} & {\footnotesize{}2.22} &  & {\footnotesize{}3.09} &  & {\footnotesize{}0.26} &  & {\footnotesize{}22} &  & {\footnotesize{}92} &  & {\footnotesize{}0.70} &  & {\footnotesize{}3.28} &  & {\footnotesize{}0.06} &  & {\footnotesize{}24} &  & {\footnotesize{}98} & \tabularnewline
\cmidrule{2-2} 
 & \textbf{\footnotesize{}$K=3.68$ } &  &  &  &  &  &  &  &  &  &  &  &  &  &  &  &  &  &  &  & \tabularnewline
\cmidrule{2-2} 
 & {\footnotesize{}0.1} & {\footnotesize{}2.64} &  & {\footnotesize{}2.75} &  & {\footnotesize{}0.82} &  & {\footnotesize{}21} &  & {\footnotesize{}74} &  & {\footnotesize{}1.28} &  & {\footnotesize{}3.63} &  & {\footnotesize{}0.13} &  & {\footnotesize{}21} &  & {\footnotesize{}96} & \tabularnewline
 & {\footnotesize{}1} & \textit{\emph{\footnotesize{}3.71}} &  & \textit{\emph{\footnotesize{}3.19}} &  & \textit{\emph{\footnotesize{}0.25}} &  & \textit{\emph{\footnotesize{}23}} &  & \textit{\emph{\footnotesize{}92}} &  & {\footnotesize{}1.15} &  & {\footnotesize{}3.57} &  & {\footnotesize{}0.04} &  & {\footnotesize{}21} &  & {\footnotesize{}99} & \tabularnewline
 & {\footnotesize{}5} & {\footnotesize{}3.60} &  & {\footnotesize{}2.93} &  & {\footnotesize{}0.11} &  & {\footnotesize{}23} &  & {\footnotesize{}96} &  & {\footnotesize{}0.60} &  & {\footnotesize{}2.95} &  & {\footnotesize{}0.02} &  & {\footnotesize{}24} &  & {\footnotesize{}99} & \tabularnewline
\bottomrule
\end{tabular}\label{tab:DE}
\end{table}
\begin{table}[h]
\centering \caption{\emph{Delayed acceptance MH without control variates }\citep{payne2015bayesian}\emph{.}
The table shows some quantities for different sample sizes ($0.1,1,5,50,80$,
in \% of $n$) to estimate the likelihood. The quantities are the
mean $\mathrm{RED}_{1}$ and $\mathrm{RED}_{2}$ in \eqref{eq:RED}
measured with respect to computing time and number of evaluations,
respectively. Furthermore, $\bar{\sigma}_{R}$ is the mean (over MCMC
iterations) standard deviation of $\log(R_{m})$. Finally, $\alpha_{1}$
and $\alpha_{2}$ are the acceptance probabilities in \eqref{eq:alpha1}
and \eqref{eq:alpha2final} (expressed in \%), where the latter is
computed conditional on acceptance in the first stage. The corresponding
MH algorithm has an acceptance rate of $\approx$$23$\%.}

{\footnotesize{}}%
\begin{tabular}{llrrrrrrrrrr}
\toprule 
 &  & {\footnotesize{}$\mathrm{RED}_{1}$} &  & {\footnotesize{}$\mathrm{RED}_{2}$\textsubscript{}} &  & {\footnotesize{}$\bar{\sigma}_{R}$} &  & {\footnotesize{}$\alpha_{1}$} &  & {\footnotesize{}$\alpha_{2}$} & \tabularnewline
\cmidrule{3-3} \cmidrule{5-5} \cmidrule{7-7} \cmidrule{9-9} \cmidrule{11-11} 
 & {\footnotesize{}0.1} & {\footnotesize{}0.00} &  & {\footnotesize{}0.00} &  & {\footnotesize{}101.94} &  & {\footnotesize{}21} &  & {\footnotesize{}0} & \tabularnewline
 & {\footnotesize{}1} & {\footnotesize{}0.36} &  & {\footnotesize{}0.37} &  & {\footnotesize{}13.81} &  & {\footnotesize{}18} &  & {\footnotesize{}3} & \tabularnewline
 & {\footnotesize{}5} & {\footnotesize{}1.39} &  & {\footnotesize{}1.45} &  & {\footnotesize{}3.52} &  & {\footnotesize{}22} &  & {\footnotesize{}29} & \tabularnewline
 & {\footnotesize{}50} & {\footnotesize{}1.13} &  & {\footnotesize{}1.33} &  & {\footnotesize{}0.63} &  & {\footnotesize{}24} &  & {\footnotesize{}80} & \tabularnewline
 & {\footnotesize{}80} & {\footnotesize{}0.91} &  & {\footnotesize{}1.08} &  & {\footnotesize{}0.31} &  & {\footnotesize{}24} &  & {\footnotesize{}90} & \tabularnewline
\bottomrule
\end{tabular}\label{tab:PM}
\end{table}
\begin{figure}[h]
\includegraphics[width=0.8\columnwidth]{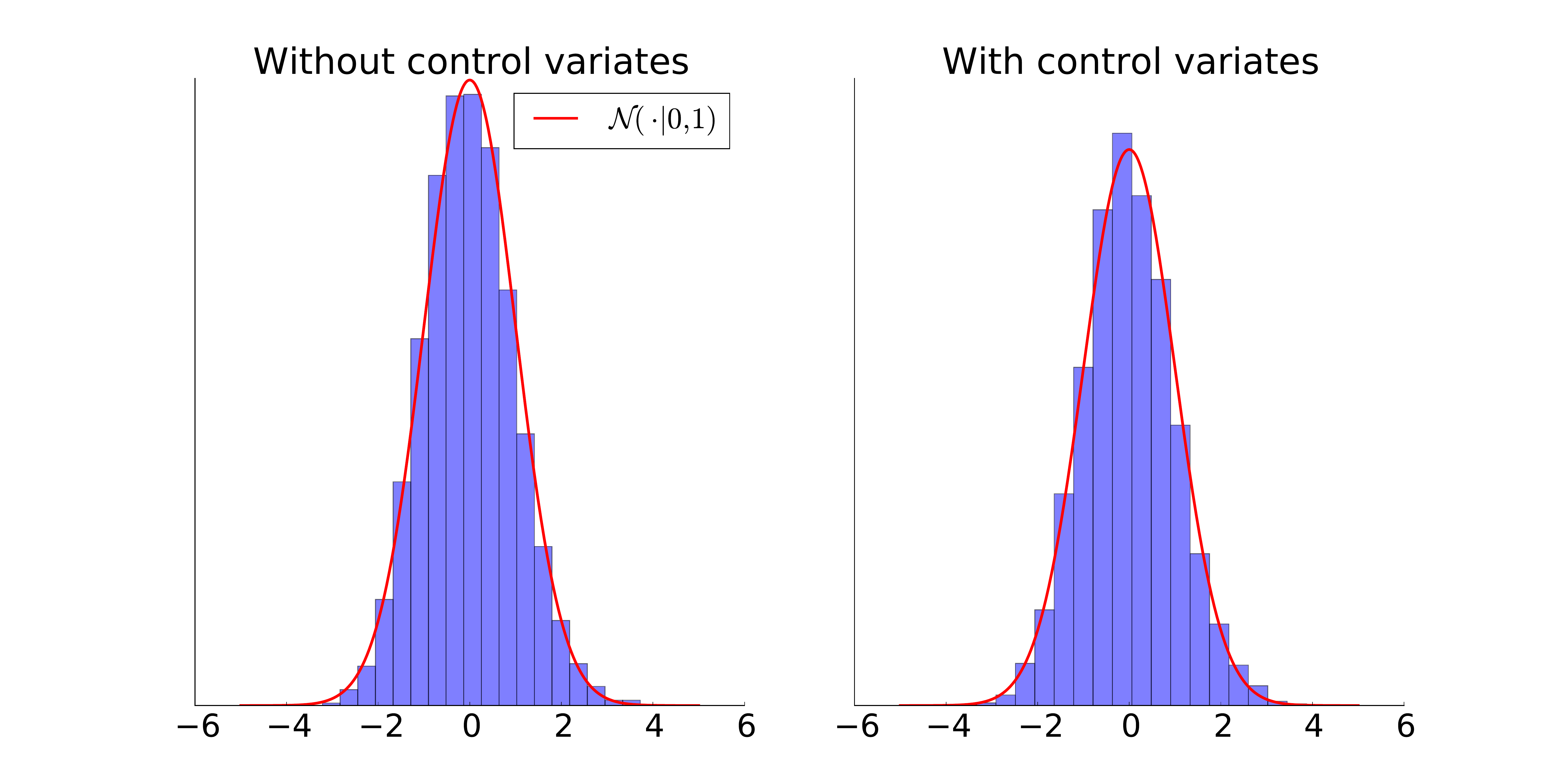}

\caption{\emph{Addressing the normality assumption in Theorem \ref{thm:theorem1}.
}The figure shows a histogram of (standardized) $\hat{l}_{m,n}(\theta_{c},\theta_{p})$
in \eqref{eq:diff_likelihood_est} ($10,000$ Monte Carlo replicates)
without control variates \citep[left,][]{payne2015bayesian} and with
control variates (right). The red solid line is the density function
of a standard normal variable. The validation is for the smallest
value of $m$ ($0.1$\% of $n$) for both cases and for the smallest
value of $K$ (less accurate, $0.03$\% of $n$) for the control variates.
The values of the parameters are $\theta_{c}=\theta^{\star}$, where
$\theta^{\star}$ is the mode, and $\theta_{p}$ is sampled from the
proposal distribution. We have verified the assumption for several
values of $\theta_{c}$ and $\theta_{p}$ (not shown here).}
\label{fig:NormalityExample1}
\end{figure}

\subsection{Example 2: DA-BPMMH}

Our second example explores how the state-dependent delayed acceptance
BPMMH improves the BPMMH proposed in \citet{quiroz2016speeding}.
The results are presented in Table \ref{tab:DelayedPMMH}. \citet{quiroz2016speeding}
in turn show that they outperform other subsampling approaches and,
in particular, that many of these approaches perform more poorly than
the standard MH (see also \citealp{bardenet2015markov}) in terms
of efficiency and/or can give a very poor approximation of the posterior.
Here we find that the delayed acceptance BPMMH is 30 times more efficient
than MH, which is a huge improvement considering the aforementioned
facts. Moreover, we find that the approximate posterior produced is
very close to the true posterior (simulated by MH), as illustrated
in Figure \ref{fig:KDEs}.

\begin{table}[h]
\centering \caption{\emph{Delayed acceptance block PMMH.} The table shows some quantities
for Block PMMH (BPMMH, \citealp{quiroz2016speeding}) and delayed
acceptance BPMMH. The latter is implemented with a Gaussian Process
(GP) and Linear Regression (LR) for learning $f(\theta)$ to predict
$e(\theta)$ in \eqref{eq:discrepency_between_stages}. The block
PMMH use a sample size of $0.5$\% which corresponds to $\sigma^{2}\approx10$.
The approximations used by BPMMH for computing the likelihood estimate
is based on $K=3.68$ (expressed as \% of $n$) number of clusters.
The delayed acceptance version uses an approximation based on $K^{(1)}=0.71$
(expressed as \% of $n$) in the first stage (and the same as BPMMH
for the second stage). The quantities are the mean $\mathrm{RED}_{2}$
in \eqref{eq:RED} measured with respect to number of evaluations.
Moreover, $\alpha_{1}$ and $\alpha_{2}$ are the acceptance probabilities
in \eqref{eq:alpha1} and \eqref{eq:alpha2final} (expressed in \%),
where the latter is computed conditional on acceptance in the first
stage. The corresponding BPMMH algorithm targets an acceptance rate
of $\approx$$10$\%, obtaining $8.4\%$ in this run. }

{\footnotesize{}}%
\begin{tabular}{llrrrrrrr}
\toprule 
 &  &  & {\footnotesize{}$\mathrm{RED}_{2}$\textsubscript{}} &  & {\footnotesize{}$\alpha_{1}$} &  & {\footnotesize{}$\alpha_{2}$} & \tabularnewline
\cmidrule{4-4} \cmidrule{6-6} \cmidrule{8-8} 
 & {\footnotesize{}BPMMH} &  & {\footnotesize{}14.03} &  &  &  &  & \tabularnewline
 & {\footnotesize{}DA-BPMMH (GP)} &  & {\footnotesize{}25.53} &  & {\footnotesize{}9.6} &  & {\footnotesize{}95} & \tabularnewline
 & {\footnotesize{}DA-BPMMH (LR)} &  & {\footnotesize{}30.19} &  & {\footnotesize{}9.2} &  & {\footnotesize{}99} & \tabularnewline
\bottomrule
\end{tabular}\label{tab:DelayedPMMH}
\end{table}

\begin{figure}[h]
\includegraphics[width=1\columnwidth]{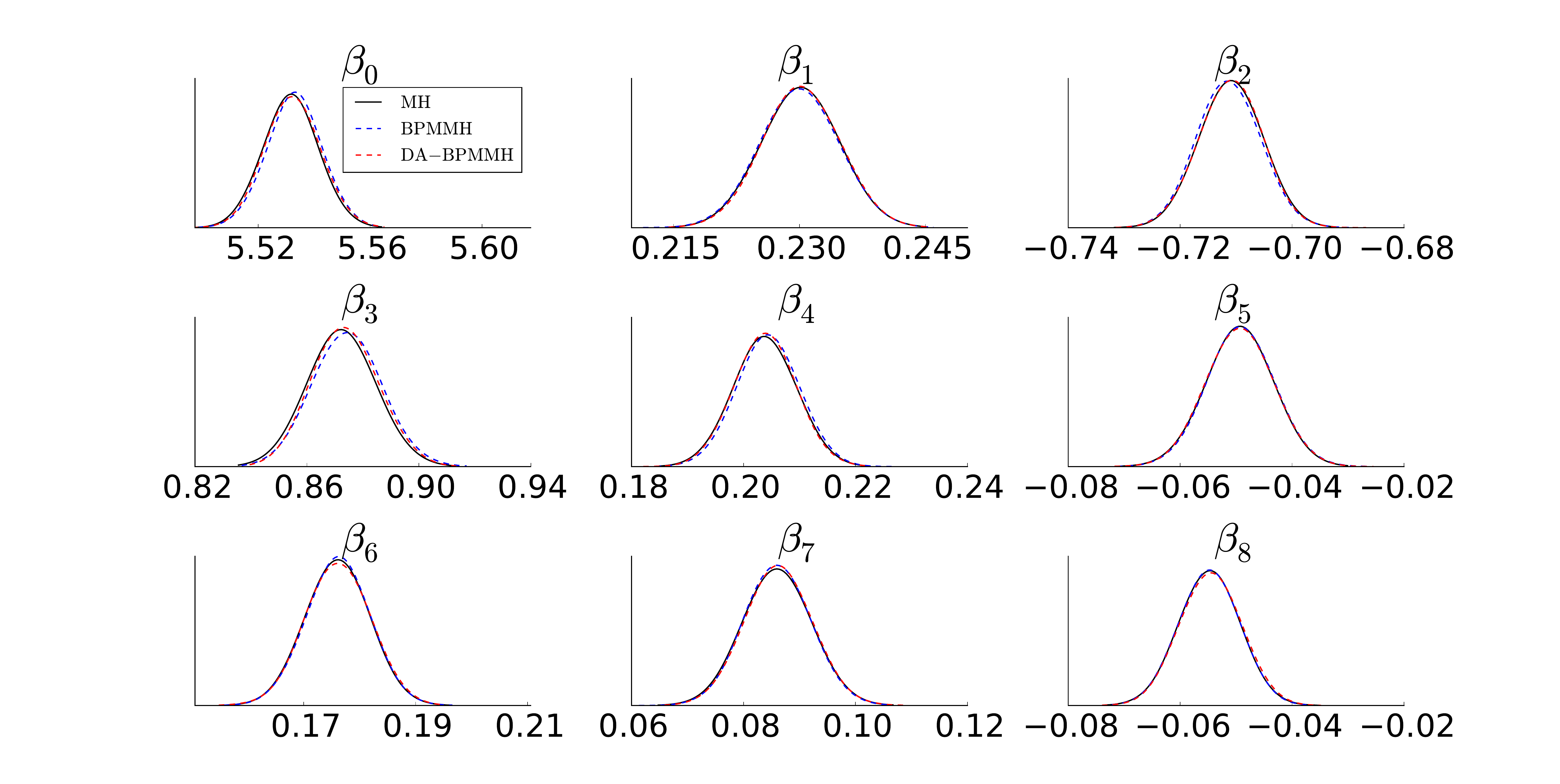}

\caption{\emph{Kernel density estimations of marginal posteriors}. The figure
shows the marginal posteriors with the different algorithms MH (standard
MH, solid black line), BPMMH (block PMMH, dashed blue line) and DA-BPMMH
(delayed acceptance BPMMH, dashed red line). The approximations used
by BPMMH for computing the likelihood estimate is based on $K=3.68$
(expressed as \% of $n$) number of clusters with $\sigma^{2}\approx10$.
The delayed acceptance version uses an approximation based on $K^{(1)}=0.71$
(expressed as \% of $n$) in the first stage.}
\label{fig:KDEs}
\end{figure}

\section{Conclusions\label{sec:Conclusions-and-Future}}

We explore the use of the efficient and robust difference estimator
in a delayed acceptance MH setting. The estimator incorporates auxiliary
information about the contribution to the log-likelihood function
while keeping the computational complexity low by operating on a sparse
set of the data. We demonstrate that the estimator is more efficient
than that of \citet{payne2015bayesian} in terms of having a much
lower variance. Moreover, we prove that a lower variability implies
that the delayed acceptance algorithm is more efficient, as measured
by the probability of accepting the second stage conditional that
the first stage is accepted. In an application to modeling of firm-bankruptcy,
we find that the proposed delayed acceptance algorithm is feasible
in the sense that it improves on the standard MH algorithm, which
is sometimes not true for \citet{payne2015bayesian}. We argue that
previous approaches (discussed in the introduction) of exact simulation
by MCMC are either (i) only possible under unfeasible assumptions
or (ii) very inefficient (compared to MH). We therefore believe that
exact simulation by subsampling MCMC is only possible by a delayed
acceptance approach, and the implementation provided here is crucial
for success.

Next, we realize that a delayed acceptance approach has the caveat
of scanning the complete data when deciding upon final acceptance.
We propose a state-dependent delayed acceptance that replaces the
second stage evaluation with an estimate. This algorithm inherently
allows for correlating the subsamples used for estimating the likelihood,
and we can leverage on recent advances in the pseudo-marginal literature
to reduce the computational cost. Moreover, we show that it is a special
case of the state-dependent delayed acceptance in \citet{christen2005mcmc}
and thus convergence to an invariant distribution follows. This distribution
is perturbed because the second stage estimate is biased, but we can
control the error and ensure that it is within $O(m^{-2})$ of the
true posterior. We demonstrate that the approximation is very accurate
and we can improve on MH by a factor of $30$ in terms of a measure
that balances statistical and computational efficiency.

\bibliographystyle{apalike}
\addcontentsline{toc}{section}{\refname}\bibliography{ref}

\appendix

\section{Proof of Theorem 1\label{sec:Comparing-the-Efficiency}}

\begin{proof}[Proof of Theorem 1]It follows from the normality assumption
that $X=R_{m}\sim\log\mathcal{N}(0,\sigma_{R}^{2})$ with density
\begin{eqnarray*}
f(x) & = & \frac{1}{x}\frac{1}{\sqrt{2\pi\sigma_{R}^{2}}}\exp\left(-\frac{1}{2\sigma_{R}^{2}}\log(x)^{2}\right),\quad x>0.
\end{eqnarray*}
 The expectation of the acceptance probability $\alpha_{2}(\theta_{c}\rightarrow\theta_{p})$
with respect to $X$ is
\begin{eqnarray*}
\mathrm{E}[\min(1,X)] & = & \int_{0}^{1}xf(x)dx+\int_{1}^{\infty}f(x)dx.
\end{eqnarray*}
Since $\text{median}(X)=1$ we obtain $\int_{1}^{\infty}f(x)dx=0.5$.
Now,
\begin{eqnarray*}
\int_{0}^{1}xf(x)dx & = & \int_{0}^{1}\frac{1}{\sqrt{2\pi\sigma_{R}^{2}}}\exp\left(-\frac{1}{2\sigma_{R}^{2}}\log(x)^{2}\right)dx\\
 & = & \exp\left(\sigma_{R}^{2}/2\right)\int_{-\infty}^{0}\frac{1}{\sqrt{2\pi\sigma_{R}^{2}}}\exp\left(-\frac{1}{2\sigma_{R}^{2}}(y-\sigma_{R}^{2})^{2}\right)dy,
\end{eqnarray*}
with $y=\log(x)$. The integrand is the pdf of $Y\sim\mathcal{N}(\sigma_{R}^{2},\sigma_{R}^{2})$
and thus
\begin{eqnarray*}
\mathrm{E}[\min(1,X)] & = & \exp\left(\sigma_{R}^{2}/2\right)\left(1-\Phi(\sigma_{R})\right)+0.5.
\end{eqnarray*}
We now show that $\mathrm{E}[\min(1,X)]$ is decreasing in $\sigma_{R}$.
We have that
\begin{eqnarray*}
\frac{d}{d\sigma_{R}}\mathrm{E}[\min(1,X)] & = & \exp\left(\sigma_{R}^{2}/2\right)\left(\sigma_{R}-\sigma_{R}\Phi(\sigma_{R})-\frac{1}{\sqrt{2\pi}}\right),
\end{eqnarray*}
and we can (numerically) compute the maximum of the expression in
brackets on the right which is $\approx-0.23$. Since $\exp\left(\sigma_{R}^{2}/2\right)>0$
it follows that $\frac{d}{d\sigma_{R}}\mathrm{E}[\min(1,X)]<0$ and
hence $\mathrm{E}[\alpha_{2}(\theta_{c}\rightarrow\theta_{p})]$ decreases
as a function of $\sigma_{R}$.

\end{proof}
\end{document}